\newcommand{\R}{\mathbb{R}}
\newcommand{\Chi}{\mathfrak{X}}
\newcommand{\comment}[1]{}
\renewcommand{\section}{\@startsection%
{section}% name
{1}% level
{0mm}% indent
{1.5\bigskipamount}% beforeskip
{0.5\bigskipamount}% afterskip
{\centering\normalsize\sc}}% style
\renewcommand{\paragraph}{\@startsection%
{paragraph}% name
{4}% level
{0mm}% indent
{\bigskipamount}% beforeskip
{-1.25ex}% afterskip
{\normalsize\sl}}% style
\def\provedboxcontents#1{$\square$}
\newtheorem{newstatement}{newstatement}
\newtheorem{theorem}[newstatement]{Theorem}
\newtheorem{corollary}[newstatement]{Corollary}
\newtheorem*{conjecture*}{Conjecture}
\newtheorem*{prop*}{Proposition}
\theoremstyle{def}
\theoremstyle{rmk}
\newtheorem{remark}[newstatement]{Remark}
\theoremstyle{claim}
\theoremstyle{remark}
\let\expandafter\oldproof\csname\string\proof\endcsname
\let\oldendproof\endproof
\renewenvironment{proof}[1][\proofname]{%
  \oldproof[\slshape #1]%
}{\oldendproof}
\let\phi\varphi
\let\epsilon\varepsilon
\renewcommand{\emph}[1]{{\slshape #1}}
\renewcommand{\em}{\sl}
\title[Uniqueness and redistributive policies]{Uniqueness of Equilibrium and redistributive policies: A Geometric Approach to Efficiency}
\author{Andrea Loi}
\address{Andrea Loi, Dipartimento di Matematica e Informatica \\
         Universit\`a di Cagliari, Italy.}
         \email{loi@unica.it}
\author{Stefano Matta}
\address{Stefano Matta, Dipartimento di Scienze economiche e Aziendali \\
         Universit\`a di Cagliari, Italy.}
         \email{smatta@unica.it}
\author{Daria Uccheddu}
\address{Daria Uccheddu, Dipartimento di Matematica e Informatica\\
         Universit\`a di Cagliari, Italy.}
         \email{daria.uccheddu@unica.it}
\date{\today}
\thanks{The first  and third author were  supported  by INdAM. GNSAGA - Gruppo Nazionale per le Strutture Algebriche, Geometriche e le loro Applicazioni and
by GOACT by Fondazione di Sardegna. The first author was also  supported by PNRR e.INS Ecosystem of Innovation for Next Generation Sardinia (CUP F53C22000430001, codice MUR ECS00000038}
\begin{document}
\begin{abstract}
This paper examines the relationship between resource reallocation, uniqueness of equilibrium and efficiency
in economics.
We explore the implications of reallocation policies for stability, conflict, and decision-making by analysing the existence of geodesic coordinate functions in the equilibrium manifold. 

 Our main result shows that in an economy with $M = 2$ consumers and $L$ goods, 
if $L$ coordinate functions, representing policies,  are geodesics on the equilibrium manifold (a property that we call the {\em finite geodesic property}), then the equilibrium is globally unique. The presence of geodesic variables indicates optimization and efficiency in the economy, while non-geodesic variables add complexity. 

Finally,  we establish a link between the existing results on curvature, minimal entropy, geodesics and uniqueness in smooth exchange economies.

This study contributes to the understanding of the geometric and economic properties of equilibria and offers potential applications in policy considerations.

{\it{Keywords}}: Uniqueness of equilibrium, redistributive policies, geodesics, equilibrium manifold, equilibrium selection, curvature, geodesics.

{\it{Subj.Class}}: {C61, C65, D50, D51.}

\end{abstract}
\maketitle

\vspace{0.3in}

\section{Introduction}\label{introduction}

In the absence of uniqueness of equilibrium, due to the role of belief formation and expectation coordination in equilibrium selection, redistributive policies may affect efficiency and social stability, by increasing the likelihood of conflict and unpredictable change and by not allowing decision-makers to make informed choices based on a well-defined equilibrium point, which in turn may lead recursively to more unstable outcomes.

The motivation of this paper is to establish a link between the resource redistribution and uniqueness. Under what conditions can redistributive policies take place under uniqueness? In our very general setting, a redistributive policy can be thought of as a change of one or more economic variables across the different states of the system representing the economy. It can therefore be represented by a smooth curve on the equilibrium manifold,  which is a mathematical representation of all possible states of an economic system in equilibrium. Each point corresponds to an equilibrium state where all relevant economic variables are in balance, and there is no tendency for the system to change unless external factors intervene.

Since a policy should be efficient, it is a natural assumption to consider that it minimises an objective function, which can be represented by distance with an appropriate metric.

A geodesic, roughly speaking, represents a straight line in a curved space or, more precisely, a curve on a manifold that minimises distance locally. Its behaviour is deeply related to the curvature of the manifold and its metric, object of analysis in differential geometry. This concept is ubiquitous, ranging from 
theoretical physics to games, information geometry, machine learning, operations research, to name but a few.

This branch of mathematics has been applied to statistics and information theory  \cite{ama, muri} 
and subsequently to econometrics (see e.g. \cite{masa}). However, very few results can be found in the economic literature, despite of the fact that the minimisation property of the geodesic fits naturally into the economic framework of optimality and efficiency and, being closely related to the metrics of the space under analysis, allows for the development of powerful methodologies that take into account several aspects considered relevant. 
As an example, \cite{kri} studies the existence of Nash equilibria by embedding the non-convex domains of the payoff functions in suitable Riemannian manifolds.
In general equilibrium theory, \cite{lmgeo, lmcatmin} use geodesics to minimise the cost of redistribution policies and catastrophes. 
But, more generally, lack of use of geodesics had been observed by Balasko \cite{balib}  with regard to planning theory and general equilibrium theory, where, on the other hand, the differential topology approach, following the ground-breaking article by \cite{deb}, has been prevalent and well suited to deal with global properties, multiplicity, and uniqueness.

Our main result is Theorem \ref{mainteor}, where we show that, in an economy with $M=2$ consumers and $L$ goods, if $L$  coordinate functions are geodesics on the equilibrium manifold (a property that we call {\em finite geodesic property}), then the equilibrium is globally unique.

Since each coordinate function acts on an economic good, the intuition behind the result is that if all the economic resources follow optimal paths in the equilibrium manifold, the equilibrium is globally unique. The interpretation in terms of optimisation and efficiency is that when all the economic variables are optimised and respond
efficiently to changes in the economic environment (e.g., policy distributions), this can lead to uniqueness, 
and hence more efficient resource allocation, cost minimisation and welfare maximisation in the economy.

The result lends itself to different interpretations and potential applications.
The existence of non-geodesic coordinate functions indicate that some economic resources do not necessarily behave optimally in the equilibrium manifold. These variables may respond less efficiently to changes in the economic environment, leading to potential inefficiencies or suboptimal outcomes. This can lead to new approaches in partial equilibrium analysis, through a further analysis of specific sectors and industries or market segments where optimization is more or less likely. Moreover, this perspective can also pave the way for a different approach to complex economic dynamics: the coexistence of geodesic and non-geodesic coordinate functions implies that the economic system is not entirely homogenous in terms of optimising behaviour. 
This can lead to interesting nonlinear dynamics, where different sectors or variables may interact in different ways and respond differently to shocks or policy changes.
Finally, understanding which variables do and do not follow geodesic paths and which do not can have important policy implications. Policymakers can focus on optimising those variables that do not follow geodesic paths in order to improve overall economic efficiency and stability.

In summary, the presence of a finite number of geodesic coordinate functions in the equilibrium manifold indicates partial optimization and efficiency in certain aspects of the economy. It also highlights areas where improvements can be made to achieve better overall economic outcomes. However, the coexistence of non-geodesic variables adds complexity to the economic dynamics and requires careful analysis and policy considerations.

The technique employed is efficient, limiting the number of curves to be studied. Furthermore, although the work is theoretical, the considerable development of algorithms that allow the geodesics of even implicit functions to be  determined numerically makes this work potentially useful for applications. 

 Finally, we believe that this result is interesting because it deepens our knowledge of the connections between geometric and economic properties of equilibria. In fact, it is deeply connected to the recent results on uniqueness using the zero curvature \cite{lmcurvuni,lmu} and minimal entropy \cite{lment} conditions.
 Corollary \ref{corequiv} summarises, to the best of our knowledge, the state of the art on these relationships in smooth exchange economies.

In recent years the concepts of differential geometry have had several applications in discrete spaces with economic properties. For example, following a network approach,
curvature has been associated with system fragility (see, e.g., \cite{jost,sgt}).
There are still relatively few papers in the economic literature that use differential geometry in its most natural domain, that of manifolds. The aforementioned works highlight a link between curvature, entropy and uniqueness.
We hope that the present work will further stimulate the reader's interest in making connections with results obtained in discrete spaces and exploring metrics with economic significance.

This paper is organized as follows. Section \ref{setting} recalls the main properties of the equilibrium manifold. Section \ref{mathprel} presents the main concepts of differential geometry used to prove our main result. Section \ref{basic} analyses the consequences of the finite geodesic property in low dimension, when the equilibrium manifold is a surface. Finally, in Section \ref{mainsection}
we prove our main result, Theorem \ref{mainteor}, and explain in Corollary \ref{corequiv} the connections between the finite geodesic property and curvature, minimal entropy and uniqueness on smooth economies.

\section{The economic setting}\label{setting}

We consider a smooth pure exchange economy with $L$ goods and $M$ consumers. 
The equilibrium manifold $E$ is defined as the set
of pairs $(p,\omega)$ such that the excess demand function is zero, where
$p$ belongs to the set of normalized prices $S=\{p=(p_1,\ldots,p_L)\in \R^L| p_l>0,l=1,\ldots,L,p_L=1\}\cong \mathbb{R}^{L-1}$ and $\omega$ belongs to the space of endowments $\Omega=\R^{LM}$.
The equilibrium manifold enjoys very nice geometric properties, being a smooth submanifold of $S\times\Omega$ globally diffeomorphic to $\R^{LM}$ \citep[Lemma 3.2.1]{balib}. 

If total resources $r\in\R^L$ are fixed,  the equilibrium manifold, denoted by $E(r)$, is a submanifold of $S\times\Omega(r)$ globally diffeomorphic to $\R^{L(M-1)}$ \citep[Corollary 5.2.5]{balib}, that is, $E(r)\cong B(r)\times \R^{(L-1)(M-1)}$, where $B(r)$ denotes the {\em price-income equilibria}, a submanifold of $S\times \R^M$ diffeomorphic to $\R^{M-1}$ \citep[Corollary 5.2.4]{balib}.
If we define this diffeomorphism as

\begin{equation}\label{parB}
\phi:\R^{M-1}\to B(r),
\end{equation}

$$t=(t_1,\ldots,t_{M-1})\mapsto  (p(t),w_1(t),\ldots,w_{M-1}(t)),$$

where $w_i$ denotes consumer $i$'s income, a parametrization of $E(r)$ 
(see formulas (6), (7) and (10) in \cite{lmcurvuni}) is given by 

\begin{equation}\label{parE}
\Phi:\R^{L(M-1)}\to E(r),
\end{equation}

$$(t,\bar\omega_1,\ldots,\bar\omega_{M-1})\mapsto (p(t),\bar\omega_1,w_1(t)-p(t)\bar\omega_1,\ldots,
\bar\omega_{M-1},w_{M-1}(t)-p(t)\bar\omega_{M-1}),$$
where $\bar\omega_i$ denotes the first $L-1$ components of $\omega_i$,
consumer $i$'s endowments vector.

\section{Mathematical Preliminaries}\label{mathprel}

In this section we present some mathematical definitions that will be used later.
We refer readers unfamiliar with differential geometry to, e.g., \cite{docs,do}.
Consider a submanifold $M$ of $(\mathbb R^n, g_{euclid})$ of dimension $m$,
where $g_{euclid}$ denotes the Euclidean \lq\lq flat'' metric. A metric is naturally induced on $M$ by its ambient space as follows. If
\[
\begin{split}
\phi :  &\: \mathbb{R}^m\longrightarrow M\subset \mathbb R^n,\\
&(x_1,\dots,x_m)\mapsto (\phi_1,\dots\phi_n),
\end{split}\] 
is a parametrization of $M$, we can consider the vector fields  
$X_1=(\frac{\partial \phi_1}{\partial x_1},\dots,\frac{\partial \phi_1}{\partial x_m})$, $X_2=(\frac{\partial \phi_2}{\partial x_1},\dots,\frac{\partial \phi_2}{\partial x_m})$ and  $X_n=(\frac{\partial \phi_n}{\partial x_1},\dots,\frac{\partial \phi_n}{\partial x_m})$. Moreover,  $\{X_1,\dots, X_n\}$ is a basis of vector fields of $T_pM$ for $p\in M$. The ambient space $\mathbb R^n$ induces on $M$ a metric given by $$ds^2=\displaystyle\sum_{i,j=1}^{m}g_{ij}dx_idx_j,$$ where $g_{ij}=\langle X_i,X_j\rangle_{g_{euclid}}$. \\
Let us denote with $\Chi (M)$ the set of all vector fields of class $C^{\infty}$ on $M$. There exists an affine connection\footnote{One can think of the connection  as an operator that enables to apply differentiation on an abstract manifold $M$ intrinsically, that is without referring to its ambient space, by extending the concept of directional derivative to vector fields defined on $M$.}
 \[\begin{split}
 \nabla: \Chi(M)\times& \Chi(M)\rightarrow \Chi(M)\\
 (X,&Y)\mapsto \nabla_XY
 \end{split}\] which satisfies the following properties:
 \begin{itemize}
 \item $\nabla_{fX+gY}=f\nabla _XZ+g\nabla_YZ$
 \item $\nabla_X(Y+Z)=\nabla_XY+\nabla_XZ$
 \item $\nabla_X(f Y)=f\nabla_X Y+X(f)Y$
 \end{itemize} 
 where $f$ and $g$ are real-valued functions of class $C^{\infty}$ on $M$.
 
The {\em Lie-braket}, defined as $[X,Y]=XY-YX$, is a vector field belonging to $\Chi(M)$.

\vspace{0.5 cm}

{\bf Theorem} (Levi-Civita see p. 55 in \cite{do}) {\em Given a Riemannian manifold $M$, there exists a unique affine connection $\nabla$ on $M$ satisfying the conditions:
\begin{itemize}
\item $\nabla $ is symmetric, that is $[X,Y]=\nabla_XY-\nabla_YX$;
\item $\nabla$ is compatible with the Riemannian metric:  $X\langle Y,Z\rangle=\langle \nabla_XY,Z\rangle+ \langle Y,\nabla_XZ\rangle$ for $X, Y, Z\in \Chi (M)$.
\end{itemize}}

\vspace{0.5 cm}

In particular, the Levi-Civita connection can be written, in a coordinate system $(U,x)$ as 
$$\displaystyle\nabla_{\frac{\partial}{\partial x_i}}\frac{\partial}{\partial x_j}=\displaystyle\sum_{k=1}^{m}\Gamma_{ij}^k\frac{\partial}{\partial x_k},$$ 
where the coefficients $\Gamma_{ij}^k$ are called the Christoffell symbols and can be computed using the following formula
$$\Gamma_{ij}^k=\frac{1}{2}\displaystyle\sum_{h=1}^{m}g^{hk}\left( \frac{\partial g_{jh}}{\partial x_i} + \frac{\partial g_{hi}}{\partial x_j}-\frac{\partial g_{ij}}{\partial x_h}\right),$$
where $g^{ij}$ is the inverse matrix of $g_{ij}=\langle X_i,X_j\rangle$.\\

A parametrized curve $\gamma:I\to M$ is a {\em geodesic} if $\nabla_{\dot \gamma}\dot \gamma=0$. This curve can be seen a generalization of the concept of straight line, whose velocity vector has derivative equal to zero.

The following proposition  enables to study the geodesic on abstract Riemannian manifolds in terms of ordinary differential equations.
\vspace{0.5cm}

\noindent {\bf Proposition} (see \cite{do}).

\noindent {\em Let $M$ be a manifold endowed with a Levi-Civita connection $\nabla$. Choose coordinates $x^i$ on some neighborhood $U$ of $p$. For any $p\in M$, any $V\in T_pM$, and any $t_0\in \mathbb R$, there exists an open interval $ I\subset \mathbb R$ containing $t_0$ and a geodesic $\gamma:I\rightarrow M$ such that $\gamma(t_0)=p$, and $\dot \gamma (t_0)=V.$ If $\gamma(t)=\left(x^1(t),\dots,x^n(t)\right)$, then its components satisfy \begin{equation}\label{geodesic}\ddot x^k(t)+\Gamma^k_{ij}x(t)\dot x^i(t)\dot x^j(t)=0.\end{equation}
Equation \eqref{geodesic} is an ordinary differential equation with initial condition $\gamma(t_0)=p$ and $\dot \gamma (t_0)=V$, hence, by the fundamental theorem of differential equations, it has
an unique solution.}

\vspace{0.5 cm}

Finally, we observe that a geodesic can also be characterized  by the property of having its acceleration vector in $\mathbb{R}^n$ everywhere parallel to the  normal vector to the surface, see e.g. {\citep[p.246]{docs}. 

\section{Basic Example with $L=2$ and $M=2$}\label{basic}

In this section, we show that, with two consumers and two commodities, if the coordinate curves $\Phi(t,0)$ and $\Phi(0,t)$ are geodesics, then the price is constant.
This is a special case we
introduce here to provide the intuition before presenting  in the next section
the proof of the more general case with an arbitrary number of goods.

Observe that if $L=M=2$ the equilibrium manifold $E(r)$ is a surface, i.e. a submanifold of dimension $2$ in a 
three-dimensional space \footnote{  $E(r)\subset S\times \Omega(r)$, where $\Omega(r)=\{\omega=(\omega_1,\omega_2), \,\vert\, \omega_i\in \mathbb R^2, \omega_1+\omega_2=r\}=\{(\omega_1,r-\omega_1)\}\cong \mathbb R^{2}$.}. An explicit parametrization of $E(r)$ is given by
\citep[Section 4]{lmcurvuni}

$$ \Phi: \mathbb{R}^2  \to E(r), $$ where $(t,\alpha) \mapsto (p(t),\alpha,w(t)-p(t)\alpha).$

Consider a basis of vector fields in $T_xE(r)$ given by
$$\Phi_0=\left(\frac{\partial p}{\partial t},0,\frac{\partial w}{\partial t}-\frac{\partial p}{\partial t}\alpha\right)=(\dot p,0,A)$$
$$\Phi_{1}=\left(0,1,-p\right),$$
where
 $A:= \dot w-\dot p \:\alpha$.

\noindent Observe that $E(r)$ is a ruled surface (see \citep[p.188]{docs}. In fact, 
$\Phi(t,v)=\Phi(t,0)+v\:\Phi_1(t)$. The normal vector to the surface is the orthogonal vector to the tangent space $T_xE(r)=\mbox{span}\{(\dot p,0,\dot w-\dot p\alpha),(0,1,-p)\}$, that is $N=(\dot p\alpha-\dot w, p\dot p,\dot p)$.

In particular, taking into account the last observation of Section \ref{mathprel}, if we suppose that the coordinate curve $\Phi(t,0)$ is a geodesic, then the second derivative of the parametrization $\ddot \Phi$ is parallel to the normal vector $N$ to the surface, that is  $\ddot \Phi\wedge N=(0,0,0)$.

An explicit computation of $\ddot \Phi(t,0)$ gives
%$\Phi(t,0)=(p(t),0,w(t))$\\
%$\dot \Phi(t,0)=(\dot p(t),0,\dot w(t))$\\
$\ddot\Phi(t,0)=(\ddot p(t),0,\ddot w(t))$ and hence, from $\ddot \Phi\wedge N=(-\dot w p \dot p, \ddot w(\dot p\alpha-\dot w)-\dot p\ddot p,p \dot p\ddot p)$,
we have that this vector is the null vector if and only if 
\[
\left\{
\begin{array}{rrr}
-\dot w p \dot p=0\\
\ddot w(\dot p\alpha-\dot w)-\dot p\ddot p=0\\
p \dot p\ddot p=0
\end{array}
\right.
\]

Recalling that the price $p$ is different from zero, from the last equation we have that $\ddot p=0$
and then $\dot p=b$ and $p(t)=b\:t+c$. Since the price is positive, that is $p(t)>0$, the only possibility is that $p(t)$ is a constant line always positive, that is a line parallel to the $t$ axis, and so the price $p$ is constant. Clearly, the other coordinate curve, $\Phi(0,t)$, is always a geodesic because we have
$\ddot\Phi(0,t)=(0,0,0)$. \\

Now we present the general idea of the proof that will be used in the more general case, using local coordinates through the Christoffel symbols.\footnote{For an explicit and tedious calculation of the Christoffel symbols, we refer the interested reader to \cite{lmu}.}
We are going to prove that the price is constant if the coordinate curves $\Phi(t,0)$ and $\Phi(0,t)$ are solutions of the ODE  \eqref{geodesic}. Curves will be parametrized by arc length.

\noindent The Christoffel symbols are given by
$$\Gamma_{ij}^k=\frac{1}{2}\displaystyle\sum_{h=0}^{1}g^{hk}\left( \frac{\partial g_{jh}}{\partial x_i} + \frac{\partial g_{hi}}{\partial x_j}-\frac{\partial g_{ij}}{\partial x_h}\right).$$
In particular, with our parametrization, they are (see \cite[Appendix A]{lmu})
\[
\begin{split}
\Gamma_{00}^0=&\frac{[(1+p^2)\dot p  \ddot p+(\dot w-\dot p x_1)(\ddot w-\ddot px_1)]}{(1+p^2)(\dot p)^2+A^2},\\
\Gamma_{00}^1=&\frac{p\left[(\dot w-\dot p x_1)\dot p\ddot p-(\ddot w-\ddot px_1)(\dot p^2)\right]}{(1+p^2)(\dot p)^2+A^2},\\
\Gamma_{10}^0=&\frac{-\dot p(\dot w-\dot p x_1)}{(1+p^2)(\dot p)^2+A^2},\\
\Gamma_{10}^1=&\frac{\dot pp (\dot p)^2}{(1+p^2)(\dot p)^2+A^2},\\
\Gamma^0_{11}=&\Gamma^1_{11}=0.
\end{split}\]

\noindent If $\gamma(t)=\left(\gamma_0,\gamma_1\right)\subset E(r)=\Phi(t,\alpha)$,
Equation \eqref{geodesic}
%$$\ddot x^k(t)+\Gamma^k_{ij}x(t)\dot x^i(t)\dot x^j(t)=0$$
becomes
\[
\left\{
\begin{array}{c}
\ddot \gamma_0(t)+\Gamma^0_{00}\dot \gamma_0(t)^2+2\,\Gamma^0_{01}\dot \gamma_0(t)\dot \gamma_1(t)=0\\
\ddot \gamma_1(t)+\Gamma^1_{00}\dot \gamma_0(t)^2+2\,\Gamma^1_{01}\dot \gamma_0(t)\dot \gamma_1(t)=0,
\end{array}
\right.
\]
which can be written, computing the Christoffel symbols,

\begin{equation}\label{sist}
\left\{
\begin{array}{crrr}
\ddot \gamma_0+\dfrac{[(1+p^2)\dot p\ddot p+(\dot w-\dot p\gamma_1)(\ddot w-\ddot p\gamma_1)]}{(1+p^2)(\dot p)^2+A^2}\dot \gamma_0^2+2\dfrac{-\dot p(\dot w-\dot p\gamma_1)}{(1+p^2)(p')^2+A^2})\dot \gamma_0\dot \gamma_1=0\\
\:\\
\ddot \gamma_1+\dfrac{p\left[(\dot w-\dot p\gamma_1)\dot p\ddot p-(\ddot w-\ddot p\gamma_1)(\dot p^2)\right]}{(1+p^2)(\dot p)^2+A^2}\dot \gamma_0^2+2\dfrac{\dot pp (\dot p)^2}{(1+p^2)(\dot p)^2+A^2}\dot \gamma_0\dot \gamma_1=0.
\end{array}
\right.
\end{equation}
In particular, if the curve is given by $\Phi(t,0)$, we have $ \gamma_1=0$ and then we can write
\[
\left\{
\begin{array}{crr}
\ddot \gamma_0+\dfrac{[(1+p^2)\dot p\ddot p+\dot w\ddot w]}{(1+p^2)(\dot p)^2+A^2}\dot \gamma_0^2=0\\
\:\\
\dfrac{p\left[\dot w\dot p \ddot p-\ddot w(\dot p^2)\right]}{(1+p^2)(\dot p)^2+A^2}\dot \gamma_0^2=0.
\end{array}
\right.
\]
If $\gamma_0=t$ then $\dot \gamma_0=1$ and  $\ddot \gamma_0=0$, and hence we obtain

\[
\left\{
\begin{array}{c}
[(1+p^2)\dot p\ddot p+\dot w\ddot w]=0\\
p\left[\dot w\dot p \ddot p-\ddot w(\dot p^2)\right]=0
\end{array}
\right.
\]
Moreover, by the definition of $\Phi$, we have $\Phi(t,0)=(p(t),0,w(t))$ and, from the fact that $\gamma $ is parametrized by arc length, we have $\Vert \dot \gamma(t)\Vert=1$, which implies $(\dot p)^2+(\dot w)^2=0$ and hence $\dot p\ddot p+\dot w\ddot w=0$, that is  $\dot w\ddot w=-\dot p\ddot p$. Using this last result, we get
 \[
\left\{
\begin{array}{c}
p^2\dot p\ddot p=0\\
p\left[\dot w\dot p \ddot p-\ddot w(\dot p^2)\right]=0
\end{array}
\right.
\]
and we conclude that if $\gamma(t)=\Phi(t,0)$ is a geodesic, then the only possibility is that $\dot p=0$ and hence the price is constant.

Finally, observe that the coordinate curve $\Phi(0,t)$ is always a geodesic, because it is a solution of 
 Equation \eqref{sist} with  $ \gamma_0=0$ and $ \gamma_1=t$.

\section{The case  $M=2$}\label{mainsection}

Let us consider a more general case, which includes the previous one as a special case, that is,
an economy with two consumers and an arbitrary number of goods.
In this case  $E(r)$ is a manifold of dimension $L$
 in a space of dimension $2L-1$.    
An explicit parametrization of $E(r)$ is given by \cite{lmu}
 \begin{equation*}  
\begin{split} 
 \Phi:\: \mathbb{R}^L &  \overset{\:\:\:\:\:\:\:\:\:\:\:\:\:\:\:\:\:\:\:\:\:\:\:\:\:\:\:}{\longrightarrow} E(r)\\
        (t,\alpha_1,\dots,\alpha_{L-1} )& \longmapsto  (p_1(t),\dots,p_{L-1}(t),\alpha_1,\dots,\alpha_{L-1},w(t)-p_1(t)\alpha_1-\dots-p_{L-1}(t)\alpha_{L-1}),\\ 
\end{split} 
\end{equation*}
where we set $x_0=t,\:\:x_1=\alpha_1,\dots,x_{L-1}=\alpha_{L-1}$.\\

\noindent The following theorem shows that the finite geodesic property implies global uniqueness.

\vskip 0.5 cm

\begin{theorem}\label{mainteor} Let $\overrightarrow{\alpha}_j$ be the $L-1$-entry vector consisting of $L-2$ zeros and $1$ in the $j$-th position. If the $L$ coordinate curves $\Phi(t,0)$ and $\Phi(0,\overrightarrow{\alpha}_j)$ are geodesics, then the price is constant.

\noindent Proof: {\rm Suppose that the curve $\gamma(t)=\Phi(t,0,\dots,0)$ is a geodesic parametrized by arc length.
We have that  $\Vert \dot \gamma(t)\Vert=1$, so $\sqrt{B+A^2}=1$ and then  $C+AA'=0$  where\\

$\fbox{$A= \dot w-\dot p_1 x_1-\dots-\dot p_{L-1}x_{L-1}$}$         $\fbox{$B=(\dot p_1)^2+\dots +(\dot p_{L-1})^2$}$              $\fbox{$C=\dot p_1\ddot p_1+\dot p_2\ddot p_2+\dots+\dot p_{L-1}\ddot p_{L-1}$}$ \\

 $ \fbox{$\Vert p\Vert ^2=(1+p_1^2+\dots+p_{L-1}^2)$} $
      $ \fbox{$A'= \ddot w- \ddot p_1x_1-\dots- \ddot p_{L-1}x_{L-1}$}. $ \\

The Christoffel symbols are
 \[
\begin{split}
\Gamma_{00}^0=&\frac{[\Vert p\Vert^2C+AA']}{\Vert p \Vert ^2B+A^2}\\
\Gamma_{00}^k=&\frac{p_{k}\left[AC-A'B\right]}{\Vert p \Vert ^2B+A^2}\\
\Gamma_{0j}^0=&\frac{-\dot p_{j}A}{\Vert p \Vert ^2B+A^2}\\
\Gamma_{0j}^k=&\frac{\dot p_{j}p_{k}B}{\Vert p \Vert ^2B+A^2}\\
\Gamma^0_{ij}=&\Gamma^0_{ii}=0\\
\Gamma^k_{ij}=&\Gamma^k_{ii}=0 \end{split}\]

and the geodesic equation from Equation \eqref{geodesic} 
becomes %if $\gamma(s)=\left(\gamma_0(s),\gamma_1(s)\dots, \gamma_{l-1}(s)\right)\subset E(r)$
{\small\[
\left\{
\begin{array}{c}
\ddot \gamma_0(t)+\Gamma^0_{00}\dot \gamma_0(t)^2+2\Gamma^0_{01}\dot \gamma_0(t)\dot \gamma_1(t)+2\Gamma^0_{02}\dot \gamma_0(t)\dot \gamma_2(t)+\dots+2\Gamma^0_{0(L-1)}\dot \gamma_0(t)\dot \gamma_{L-1}(t)+ \mbox{Nulls $\Gamma_{ij}$ }=0,\\
\ddot \gamma_1(t)+\Gamma^1_{00}\dot \gamma_0(t)^2+2\Gamma^1_{01}\dot \gamma_0(t)\dot \gamma_1(t)+2\Gamma^1_{02}\dot \gamma_0(t)\dot \gamma_2(t)+\dots+2\Gamma^0_{0(L-1)}\dot \gamma_0(t)\dot \gamma_{L-1}(t)+ \mbox{Nulls $\Gamma_{ij}$ }=0,\\
\dots\\
\ddot \gamma_{L-1}(t)+\Gamma^{L-1}_{00}\dot \gamma_0(t)^2+2\Gamma^{L-1}_{01}\dot \gamma_0(t)\dot \gamma_1(t)+2\Gamma^{L-1}_{02}\dot \gamma_0(s)\dot \gamma_2(t)+\dots+2\Gamma^{L-1}_{0(L-1)}\dot \gamma_0(t)\dot \gamma_{L-1}(t)+ \mbox{Nulls $\Gamma_{ij}$}=0.\\

\end{array}
\right.
\]}

If $\gamma(t)=\Phi(t,0,\dots,0)$, then the geodesic equations are
\[
\left\{
\begin{array}{c}
\Gamma^0_{00}=0\\
\Gamma^1_{00}=0\\
\dots\\
\Gamma^{L-1}_{00}=0,\\

\end{array}
\right.\:\:\:\:\mbox{   that is,                     }\:\:\:\:\:\:
\left\{
\begin{array}{c}
\frac{[\Vert p\Vert^2C+AA']}{\Vert p \Vert ^2B+A^2}=0\\
\frac{p_{1}\left[AC-A'B\right]}{\Vert p \Vert ^2B+A^2}=0\\
\dots\dots \dots\\
\frac{p_{L-1}\left[AC-A'B\right]}{\Vert p \Vert ^2B+A^2}=0\\
\end{array}
\right.
\]

which can be simplified as follows

 \[
\left\{
\begin{array}{c}
[\Vert p\Vert^2C+AA']=0\\
p_{1}\left[AC-A'B\right]=0\\
\dots\dots \dots\\
p_{L-1}\left[AC-A'B\right]=0\\
\end{array}
\right.
\]
\\

On the other end, if we suppose that  $-AA'=C$, we get
  \[
\left\{
\begin{array}{c}
-(p_1^2+p_2^2+\dots+p_{L-1}^2)AA'=0\\
p_{1}\left[-A^2A'-A'B\right]=0\\
\dots\dots \dots\\
p_{L-1}\left[-A^2A'-A'B\right]=0.\\
\end{array}
\right.
\]
This,  applied to our curve, gives 
\[
\left\{
\begin{array}{c}
-(p_1^2+p_2^2+\dots+p_{L-1}^2)AA'=0\\
-p_{1}\ddot w(A^2+B)=0\\
\dots\dots \dots\\
-p_{L-1}\ddot w(A^2+B)=0.\\
\end{array}
\right.
\]
which implies $\ddot w=0$.\\
Now, if we also assume that $\gamma(t)=\Phi(0,1,0,\dots,0)$ is a geodesic,
we obtain
\[
\left\{
\begin{array}{c}
-(p_1^2+p_2^2+\dots+p_{L-1}^2)AA'=0\\
-p_{1}(\ddot w-\ddot p_1)(A^2+B)=0\\
\dots\dots \dots\\
-p_{L-1}(\ddot w-\ddot p_1)(A^2+B)=0\\
\end{array}
\right.
\]
which implies $\ddot p_1=0$ and hence $\dot p_1=c$ and $p_1(t)=c\,t+d$.
But, since the $p_j$'s are positive, $p_1(t)$ must be constant. 
The same reasoning can be applied to show that every $p_j$ is constant, 
by letting $\gamma(t)=\Phi(0,0,\dots,1,\dots,0)$ with $1$ in the $j$-position.}
\qed
\end{theorem}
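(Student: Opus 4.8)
The plan is to convert the geodesic hypotheses into scalar Christoffel conditions and then to use the constant-speed (arc-length) normalisation to force $\ddot p_j=0$ for every price component. The first step is to read off the induced metric in the coordinates $(x_0,\dots,x_{L-1})=(t,\alpha_1,\dots,\alpha_{L-1})$. Differentiating $\Phi$ gives the frame $\Phi_0=(\dot p_1,\dots,\dot p_{L-1},0,\dots,0,A)$ and $\Phi_j=(0,\dots,1,\dots,0,-p_j)$ for $j\ge1$, so that $g_{00}=A^2+B$, $g_{0j}=-Ap_j$, and $g_{jk}=\delta_{jk}+p_jp_k$ for $j,k\ge1$, with $A$ and $B$ as in the boxed notation. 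The decisive structural fact is that the endowment block $(g_{jk})_{j,k\ge1}=I+pp^{\top}$ depends on $t$ alone while the mixed entries are linear in $A$; computing the symmetrised derivatives, the combination $\partial_i g_{jh}+\partial_j g_{hi}-\partial_h g_{ij}$ cancels for every $h$, so $\Gamma^{k}_{ij}=0$ whenever $i,j\in\{1,\dots,L-1\}$, for all $k$. I would prove this cancellation directly (it underlies the vanishing symbols listed in the statement) and obtain the surviving symbols $\Gamma^{\bullet}_{00}$, $\Gamma^{\bullet}_{0j}$ by inverting $g$; the inversion is the one laborious ingredient, handled by the Sherman--Morrison formula for $I+pp^{\top}$ together with block inversion, or simply quoted from \cite{lmu}. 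A by-product worth noting is that any curve varying a single endowment coordinate is automatically a geodesic, so all the content lies in the price-direction curves $t\mapsto\Phi(t,\bar\alpha)$.

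I then fix an endowment level $\bar\alpha$ and study $\gamma(t)=\Phi(t,\bar\alpha)$, for which $\dot\gamma=\Phi_0$ and $\ddot\gamma=0$. Because every $\Gamma^{\bullet}_{ij}$ with $i,j\ge1$ vanishes, the geodesic equation \eqref{geodesic} collapses to the $L$ scalar conditions $\Gamma^{0}_{00}=\dots=\Gamma^{L-1}_{00}=0$. Separately, a geodesic has constant speed, so $g_{00}=A^2+B$ is constant along $\gamma$ and, after parametrising by arc length, equals $1$; differentiating this identity gives $C=-AA'$. Substituting $C=-AA'$ into $\Gamma^{k}_{00}=0$ for $k\ge1$ reduces each equation to $p_k\,A'(A^2+B)=0$, whence $p_k>0$ and $A^2+B>0$ force $A'=0$ along $\gamma$; the surviving equation $\Gamma^{0}_{00}=0$ then becomes $AA'\sum_i p_i^2=0$ and is automatically satisfied.

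Finally I apply this to the $L$ chosen curves. For $\bar\alpha=0$ one has $A'=\ddot w$, hence $\ddot w=0$. For $\bar\alpha=\overrightarrow{\alpha}_j$ one has $A'=\ddot w-\ddot p_j$, which together with $\ddot w=0$ yields $\ddot p_j=0$; thus $p_j(t)=c_jt+d_j$ is affine, and positivity of the price forces $c_j=0$, so $p_j$ is constant. Letting $j$ range over $1,\dots,L-1$ shows that every price component is constant, which is the claim. I expect the main obstacle to be the metric inversion and the attendant bookkeeping in the first step: the argument hinges on keeping the cancellation $\Gamma^{\bullet}_{ij}=0$ (for $i,j\ge1$) transparent and on writing the surviving $\Gamma^{k}_{00}$ in a form where the substitution $C=-AA'$ cleanly exposes the factor $A'(A^2+B)$; once that factorisation is visible, the positivity step is immediate.
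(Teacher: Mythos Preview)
Your proposal is correct and follows essentially the same route as the paper: list the Christoffel symbols (with $\Gamma^{k}_{ij}=0$ for $i,j\ge1$), specialise the geodesic equation along $t\mapsto\Phi(t,\bar\alpha)$ to $\Gamma^{k}_{00}=0$, use the arc-length identity $C=-AA'$ to reduce the $k\ge1$ equations to $p_kA'(A^2+B)=0$, and then evaluate at $\bar\alpha=0$ and $\bar\alpha=\overrightarrow{\alpha}_j$ to obtain $\ddot w=0$ and $\ddot p_j=0$, with positivity forcing constancy. Your presentation is slightly tidier---you do the reduction once for general $\bar\alpha$ and then specialise, and you note explicitly that the endowment-direction coordinate curves are geodesics for free---but the mathematical content and the sequence of steps are the same as the paper's.
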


The following two remarks highlight the role played by the economic assumptions.

\begin{remark}\rm
The economic assumption of positive prices plays a crucial role as we can see from the following counterexample.
Consider the manifold with the following parametrization, where $p_1,\dots, p_{L-2}$ are constants and $p_{L-1}(t)=t$,
 \begin{equation*}  
\begin{split} 
 \Phi:\: \mathbb{R}^L &  \overset{\:\:\:\:\:\:\:\:\:\:\:\:\:\:\:\:\:\:\:\:\:\:\:\:\:\:\:}{\longrightarrow} \mathbb R^{L-1}\\
        (t,\alpha_1,\dots,\alpha_{L-1} )& \longmapsto  (p_1,\dots,t,\alpha_1,\dots,\alpha_{L-1},w(t)-p_1\alpha_1-\dots-t\alpha_{L-1}).\\ 
\end{split} 
\end{equation*} 
The curves $\Phi(t,0,\dots,0)$, $\Phi(0,1,\dots,0)$ and $\Phi(0,0,\dots,1,\dots, 0)$ are all geodesics of this manifold, but the price is not constant.
\end{remark}

\begin{remark}\rm
We provide an example of a ruled hypersurface $F(t,\alpha)\subset \mathbb{R}^{L+1}$, $t\in\mathbb{R}$ and $\alpha\in\mathbb{R}^{L-1}$, with zero sectional curvature that is not a hyperplane, even if
the set of functions $F(t, \bar \alpha)$, for every $\bar \alpha$ in $\mathbb{R}^{L-1}$, are geodesics.
Let $F(t, \alpha)=\gamma(t)+\alpha_1e_1+\dots+\alpha_{L-1}e_{L-1}$, where $e_1,\dots,e_{L-1}$ is the canonical orthonormal basis of $\mathbb{R}^{L-1}$ and $\gamma(t)=(0,\dots,0,\gamma_1(t),\gamma_2(t))\in \mathbb{R}^{L+1}$. Moreover, the tangent space of $F(t,\alpha)$ is given by the span of  $\partial F_t=(0,0,\dots,\dot \gamma_1(t),\dot \gamma_2(t))$ and $\partial F_{\alpha_j}=e_j=(0,0\dots,1,\dots,0)$ with $j=1,\dots,L-1$. Then the curves $F(t,0)$ and $F(t, \bar \alpha)$ are geodesics, in fact $\ddot F(t,0)=(0,\dots,0,\ddot \gamma_1(t),\ddot \gamma_2(t))$ and $\ddot F(t, \bar \alpha)=(0,\dots,0,\ddot \gamma_1(t),\ddot \gamma_2(t))$ are both orthonormal to the tangent space and so parallel to the normal vector of the hypersurface.
\end{remark}

In the following corollary we summarise what, to the best of our knowledge, is known about the relationship between curvature, minimal entropy, geodesics and uniqueness on smooth exchange economies.

\begin{corollary}\label{corequiv}
The following properties of the equilibrium manifold $E(r)$
are equivalent for an  economy with $M=2$ agents and an arbitrary number of goods:

\begin{itemize}
\item[(1)] constant equilibrium price for every economy;

\item[(2)] uniqueness of equilibrium for every economy;

\item[(3)] zero curvature property;

\item[(4)] minimal entropy property;

\item[(5)] the $L$ coordinate functions are geodesics.

\end{itemize}

\end{corollary}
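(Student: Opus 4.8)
The plan is to prove the corollary as a collection of equivalences all pivoting on property (1), constancy of the equilibrium price, which plays the role of a geometric hub: the only genuinely new link is the one supplied by Theorem \ref{mainteor}, and everything else is either elementary or quotable from the cited literature. So I would first establish (1)$\Leftrightarrow$(5) directly, then close the loop among (1), (2), (3), (4) by invoking prior work.

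The equivalence (1)$\Leftrightarrow$(5) is the contribution of the present paper. The implication (5)$\Rightarrow$(1) is precisely Theorem \ref{mainteor}, once one notes that the hypothesis of (5) --- that the $L$ coordinate functions are geodesics --- is literally the hypothesis on the coordinate curves $\Phi(t,0)$ and $\Phi(0,\overrightarrow{\alpha}_j)$ appearing there. For the converse (1)$\Rightarrow$(5), I would observe that if $p(t)\equiv\bar p$ is constant then the image of $\Phi$ lies in the affine subspace $\{\bar p\}\times\R^{L-1}\times\R$ of the ambient Euclidean space (indeed it fills it, since the income map is onto via the diffeomorphism onto $B(r)$); hence $E(r)$ is a flat. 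An affine subspace is totally geodesic, its second fundamental form vanishing identically, so every straight line contained in it is a geodesic, and in particular each coordinate curve $\Phi(t,0)$, $\Phi(0,\overrightarrow{\alpha}_j)$, being affine in its parameter and parametrized by arc length, is a geodesic. The same observation yields (1)$\Rightarrow$(3) at no extra cost, since a flat has identically vanishing curvature.

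Next I would assemble the classical equivalences among (1), (2), (3) and (4) by quoting prior work. The equivalence of constant price and global uniqueness, (1)$\Leftrightarrow$(2), follows from the behaviour of the natural projection restricted to $E(r)$ (Balasko \cite{balib}, see also \cite{lmcurvuni}); the equivalence with the zero curvature property, (2)$\Leftrightarrow$(3), is established in \cite{lmcurvuni,lmu}; and the equivalence of zero curvature with the minimal entropy property, (3)$\Leftrightarrow$(4), is the content of \cite{lment}. Chaining these together with the first step gives that (1), (2), (3), (4) and (5) are mutually equivalent.

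The main obstacle is entirely concentrated in the implication (5)$\Rightarrow$(1), and that difficulty has already been absorbed into the proof of Theorem \ref{mainteor}; the remaining links are either the elementary flatness argument above or direct citations. In assembling them the one point requiring care is consistency of quantifiers: properties (1) and (2) are phrased ``for every economy'', whereas (3)--(5) are geometric conditions on a single manifold $E(r)$. I would therefore check that each cited equivalence is stated for the equilibrium manifold of a fixed economy and that ``constant equilibrium price for every economy'' is read as constancy of the price coordinate along all of $E(r)$, so that the hub property (1) denotes the same statement in every one of the five equivalences.
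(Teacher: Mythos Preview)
Your proposal is correct and follows essentially the same approach as the paper: use (1) as the hub, invoke Theorem \ref{mainteor} for (5)$\Rightarrow$(1), the flatness of $E(r)$ for (1)$\Rightarrow$(3) and (1)$\Rightarrow$(5), and quote prior work for the remaining links. One minor attribution point: the paper cites \cite{lment} for (2)$\Leftrightarrow$(4) (minimal entropy versus uniqueness) rather than (3)$\Leftrightarrow$(4), and in the $M=2$ setting uses \cite{lmu} alone for (3)$\Rightarrow$(2), reserving \cite{lmcurvuni} for the $L=2$ case.
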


\begin{proof}
First observe that $(1)\implies (2)$ (obvious) and  $(2)\implies (1)$ by \citep[p. 188 Theorem 7.3.9 part (2)]{balib}, hence $(1)\iff (2)$. Moreover, property $(1)$ means that the equilibrium correspondence is flat, so we have that $(1)\implies (3)$ and $(1)\implies (5)$. On the other hand, 
$(3)\implies (2)$  by \citep[Corollary (4)]{lmu} and $(5)\implies (2)$  by Theorem  \ref{mainteor}.
Finally, $(2)\implies (4)$ \citep[Remark 2.2]{lment} and $(4)\implies (2)$ by \citep[Theorem 3.1]{lment}.
\end{proof}

We conjecture that the equivalences in Corollary \ref{corequiv} hold for any number of goods or agents. What it is known is that the equivalence $1\iff 2$ holds for an arbitrary number of goods or agents.  Moreover, for two-good economies $L=2$ with an arbitrary number of agents, $(3)\iff (2)$ by \citep[Theorem 5.1]{lmcurvuni}. On the other hand, the known sufficient conditions used to extend the equivalence $(4)\iff (2)$ to the same setting are quite strong \citep[Theorem 4.1]{lment}.
The conjecture in the general case is still open.

\end{document}